\setlist{noitemsep, topsep=0pt}
\definecolor{Blue}{RGB}{86,180,233}
\definecolor{Orange}{RGB}{230,159,0}
\definecolor{Green}{RGB}{0,158,115}
\definecolor{GmailBlue}{RGB}{42, 93, 176} 
\newcommand{\bibtexorder}[1]{}
\pgfplotsset{compat=newest}
\pgfplotsset{cycle list/Set1}
\tikzset{
    vertex/.style = {shape=circle,draw, minimum size = 1.8em, inner sep = 0pt},
    edge/.style = {->,> = latex}
}
\newtheoremstyle{break}
{}
{}
{}
{}
{\bfseries}
{}
{\newline}
{}
\newtheorem*{theorem*}{Theorem}
\newtheorem*{cor*}{Corollary}
\newtheorem{lem}{Lemma}
\crefname{prop}{Proposition}{Propositions}
\crefname{thm}{Theorem}{Theorems}
\crefname{lem}{Lemma}{Lemmas}
\crefname{blem}{Lemma}{Lemmas}
\theoremstyle{definition}
\newtheorem*{rem*}{Remark}
\newtheorem*{claim*}{Claim}
\def\s{\sigma}
\def\D{\Delta}
\def\DD{\mathcal{D}}
\def\P{\mathbf{P}}
\DeclareMathOperator{\E}{\mathbf{E}}
\DeclareMathOperator{\supp}{supp} 
\DeclareMathOperator{\marg}{marg} 
\newcommand{\Brac}[1]{\left[ #1 \right]}
\title{Comment on Jackson and Sonnenschein (2007) ``Overcoming Incentive Constraints by Linking Decisions''}
\author{%
	Ian Ball%
	\thanks{Department of Economics, MIT, ianball@mit.edu.}
	\and 
	Matthew O. Jackson\thanks{Department of Economics, Stanford University,  jacksonm@stanford.edu}
	\and
	Deniz Kattwinkel%
	\thanks{Department of Economics, UCL, d.kattwinkel@ucl.ac.uk.}
}
\date{\today}
\begin{document}

\maketitle

We correct a bound in the definition of approximate truthfulness used in the body of the paper of \cite{jackson2007overcoming}.  The proof of their main theorem uses a different permutation-based definition, implicitly claiming that the permutation-version implies the bound-based version.  We show that this claim holds only if the bound is loosened. The new bound is still strong enough to guarantee that the fraction of lies vanishes as the number of problems grows, so the theorem is correct as stated once the bound is loosened. 

\paragraph{Setting}

Recall the setting of \citet[hereafter JS]{jackson2007overcoming}. Consider an $n$-agent collective decision problem $\DD = (D, U,P)$, where $D$ is the finite set of decisions; $U = U_1 \times \cdots \times U_n$ is the finite set of possible profiles of utility functions on $D$; and $P = (P_1, \ldots, P_n)$ in $\D(U_1) \times \cdots \times \D (U_n)$ is the profile of priors. 

There are $K$ independent copies of this decision problem, labeled $k =1 , \ldots, K$. Each agent $i$ knows their preference vector $u_i = (u_i^1, \ldots, u_i^K)$ in $U_i^K$, and their total payoff from a decision vector $(d^1, \ldots, d^K)$ is the sum $u_i^1(d^1) + \cdots + u_i^K (d^K)$. Utility functions are drawn independently across agents and decision problems, according to the priors in $P$. 

Given an ex ante Pareto efficient social choice function $f \colon U \to \D (D)$, JS introduce the following \textit{linking mechanism}. Each agent 
$i$ is asked to report a preference vector $\hat{u}_i = (\hat{u}_i^1, \ldots, \hat{u}_i^K)$ with exactly the same utility frequencies as the distribution $P_i^K$, where $P_i^K$ is the closest approximation of $P_i$ with the property that every probability is a multiple of $1/K$. In each decision problem, the mechanism applies the social choice function $f$ to the profile of preferences reported on that problem. 

To formalize the linking mechanism, define the \emph{marginal (distribution)} of a vector $u_i$ in $U_i^K$, denoted either $\marg u_i$ or $\marg (\cdot | u_i)$, by
\[
    \marg ( v_i | u_i) = \# \{ k: u_i^k = v_i\}/K, \qquad v_i \in U_i. 
\]
The linking mechanism is a pair $(M^K, g^K)$. The message space $M^K$ equals the product $M_1^K \times \cdots \times M_n^K$, where  
\[
    M_i^K = \{ \hat{u}_i \in U_i^K : \marg \hat{u}_i = P_i^K \}. 
\]
The outcome rule $g^K \colon M^K \to (\D(D))^K \subset \D (D^K)$ is defined by\footnote{%
This exact outcome rule is used only in the case $n =1$. For $n > 1$, the reports are modified before applying $f$ in such a way that the modified reports follow $P$ exactly.}
\[
    g^K ( \hat{u}^1, \ldots, \hat{u}^K) = (f( \hat{u}^1), \ldots, f(\hat{u}^K)),
\]  
where $\hat{u}^k$ denotes the vector  $(\hat{u}_1^k, \ldots, \hat{u}_n^k)$ of reports on problem $k$.

In the linking mechanism, an agent cannot report exactly truthfully if their realized preference vector violates the marginal constraint. JS focus on strategies in which each agent lies in as few decision problems as is feasible under the mechanism. They define a strategy $\s_i^K \colon U_i^K \to M_i^K$ to be \emph{approximately truthful} if 
\begin{equation} \label{eq:approx_truthful}
    \# \{k: [\s_i^K(u_i)]^k \neq u_i^k \} = \min_{\hat{u}_i \in M_i^K}  \# \{k: \hat{u}_i^k \neq u_i^k \},
\end{equation}
for all $u_i$ in $U_i^K$, where $[\s_i^K(u_i)]^k$ denotes the $k$-th component of $\s_i^K(u_i)$. The right side of \eqref{eq:approx_truthful} equals $K d( \marg u_i, P_i^K)$, where $d$ is the total variation metric on $\D (U_i)$ defined by $d (Q, Q') = \sum_{v_i \in U_i} ( Q(v_i) - Q'(v_i))_+$.\footnote{The right side of \eqref{eq:approx_truthful} is the minimal cost in the optimal transport problem between measures $K \marg u_i$ and $K P_i^K$ with cost function $c(x,y) = [x \neq y]$.}

We propose a weaker bound. A strategy $\s_i^K \colon U_i^K \to M_i^K$ is \emph{approximately truthful*} if 
\begin{equation} \label{eq:approx_truthful_star}
    \# \{k: [\s_i^K(u_i)]^k \neq u_i^k \} \leq \left(\# U_i-1 \right) K d( \marg u_i, P_i^K),
\end{equation} 
for all $u_i$ in $U_i^K$. The inequality in \eqref{eq:approx_truthful_star} relaxes \eqref{eq:approx_truthful} by a factor of $\left(\# U_i -1 \right)$. The agent can lie $\left(\# U_i -1 \right)$ times more than is required by the marginal constraint. The two definitions coincide if $\#U_i \leq 2$. For $\# U_i > 2$, the new definition is strictly weaker.  Both definitions extend immediately to mixed strategies.\footnote{A mixed strategy is approximately truthful (approximately truthful*) if it can be expressed as a mixture over approximately truthful  (approximately truthful*) pure strategies.} 

We also give a name to a different notion of truthfulness that appears in JS's proof. A strategy $\s_i^K \colon U_i^K \to \D (M_i^K)$ is \emph{permutation-truthful} if for each $u_i$ in $U_i^K$ and each $\hat{u}_i$ in $\supp \s_i^K (u_i)$ the following holds: for any subset $S$ of $\{1, \ldots, K\}$, if there is a 
bijection $\pi$ on $S$ such that
$\hat{u}_i^k = u_i^{\pi(k)}$ for all $k$ in $S$, then $\hat{u}_i^k = u_i^k$ for all $k$ in $S$. That is, the agent never nontrivially permutes their true preferences over a subset of decision problems. 

The concepts of permutation- and approximate truthfulness serve distinct roles in JS's argument. Permutation-truthfulness captures the reporting incentives created by the efficiency of the outcome function. Approximate truthfulness directly quantifies how frequently the reports match the truth.   

Theorem 1.i in JS (p.~248) says that each linking mechanism has a Bayesian equilibrium in approximately truthful strategies; however, their proof shows only that there exists a Bayesian equilibrium in (label-free\footnote{This means that whenever the agent's preference vector $u_i$ is permuted, their report $\s_i^K(u_i)$ is permuted in the same way; see JS (p.~251).}) permutation-truthful strategies. We give a counterexample to the existence of an approximately truthful equilibrium. Next we prove that permutation-truthful strategies are approximately truthful* and that this weaker property is still sufficient for part (ii) of Theorem 1. The remaining parts of the theorem are true as stated---only part (iii) mentions approximate truthfulness and it is true under either definition. Therefore, Theorem 1 is correct if \emph{approximately truthful} is everywhere replaced by \emph{approximately truthful*}.

\paragraph{Counterexample} \label{sec:countterexample}

Suppose there is a single agent ($n = 1$). The set of decisions is $D = \{ a, b, c\}$. The agent has three possible utility functions, denoted $u( \cdot | A)$, $u( \cdot | B)$, and $u( \cdot | C)$. The prior $P$ puts probability $1/3$ on each utility function. We say that the agent's \emph{type} is either $A$, $B$, or $C$. Suppose that type $A$ (respectively $B$, $C$) strictly prefers decision $a$ (respectively $b$, $c$) to the other two decisions. Therefore, the unique ex ante Pareto efficient social choice function is $f(A) = a$, $f(B) = b$, and $f(C) = c$. 

Consider linking $K = 3$ decisions. In the linking mechanism, the agent must report a vector $\hat{u} = (\hat{u}^1, \hat{u}^2, \hat{u}^3)$ in which $A$, $B$, and $C$ each appear exactly once. Suppose the agent has type vector $(A,A,B)$, as indicated in \cref{table:ce}. Reporting truthfully would violate the quota. Under an approximately truthful strategy, the agent must lie exactly once by reporting either $(A,C,B)$ or $(C,A,B)$. But the agent strictly prefers to lie twice by reporting $(A,B,C)$ if 
\begin{align*} \label{counter-exmp}
    u (b | A) + u(c | B) > u ( c | A) + u(b | B),
\end{align*}
which holds as long as $u(c|A)$ is low enough. In this case, reporting $C$ in problem $2$ is so costly that the agent prefers to report $B$ even though this forces them to lie again in problem $3$ to satisfy the quota.\footnote{In general, telling a different lie in one problem can start a cycle of up to $\#U_i-1$ lies in total. If the cycle had length $\#U_i$, then the agent could report truthfully on each problem in the cycle.} Reporting $(A,B,C)$ does not violate approximate truthfulness* or permutation-truthfulness.

\begin{table}
\begin{center}
\begin{tabular}{lc>{\columncolor[gray]{0.9}}c>{\columncolor[gray]{0.9}}cc}
    &  1& 2&3\\
    \hline
     type vector  & $A$ & $A$ & $B$ \\
     approximate truth  & $A$ & $C$ & $B$   \\
     deviation  & $A$ & $B$ & $C$ 
\end{tabular}
\end{center}
\caption{Counterexample}
\label{table:ce}
\end{table}

\paragraph{Approximate efficiency} \label{sec:efficiency}

Theorem 1.ii in JS says that their sequence $\{\s^K\}$ of equilibria \emph{approximate} $f$ in the sense that
\begin{equation} \label{eq:efficient}
 \lim_{K} \Brac{ \max_{k \leq K} \, \P \Big\{ g_k^K (\s^{K} (u)) \neq f(u^k) \Big\} } = 0,
\end{equation}
where $g_k^K$ denotes the component of $g^K$ in the $k$-th decision problem; the equation inside the probability is between lotteries in $\D(D)$; and the probability is taken over the random vector $u$ in $U^K$ and possible mixing in $\s^K$. By the definition of $g^K$, \eqref{eq:efficient} holds as long as we have
\begin{equation} \label{eq:truthful}
    \lim_{K} \Brac{ \max_{k \leq K} \, \P \bigl( [\s^{K} (u)]^{k} \neq u^k \bigr) } = 0.
\end{equation}

JS observe that \eqref{eq:truthful} follows from the law of large numbers for label-free approximately truthful strategy profiles. We confirm that \eqref{eq:truthful} holds for label-free approximately truthful* strategy profiles $\{\s^K\}$. For such profiles, we have for each agent $i$ that
\begin{equation*}
\begin{aligned}
    \max_{k \leq K} \, \P \bigl( [\s_i^{K} (u_i)]^{k} \neq u_i^k \bigr) 
    &\leq (\# U_i - 1) \E [d( \marg u_i, P_i^K)] \\
    &\leq  (\# U_i - 1) \E [d( \marg u_i, P_i)] + (\# U_i - 1) d ( P_i, P_i^K).
\end{aligned}
\end{equation*}
As $K \to \infty$, the approximation $P_i^K$ converges to $P_i$, and the expectation goes to zero by Glivenko--Cantelli (since $\#U_i$ is finite). Then \eqref{eq:truthful} follows by applying a union bound over the agents $i = 1, \ldots, n$.\footnote{For any (not necessarily label-free) approximately truthful* strategies $\s_i^K$, we can still conclude that the expected fraction of reports that are lies,  $\E[  \# \{ k: [\s_i^K (u_i)]^k \neq u_i^k \}]/K$, tends to $0$ as $K \to \infty$. This is the only property of approximately truthful strategies that is used in JS's proof of Theorem 1.iii-v.}

To complete the proof, we check that permutation-truthfulness implies approximate truthfulness*. Consider an agent $i$. The following combinatorial lemma guarantees that there is a sufficiently large subset $S$ of decision problems over which the agent permutes their true preferences. Under a permutation-truthful strategy, the agent is truthful on $S$, so the size of the complement of $S$ gives the desired bound \eqref{eq:approx_truthful_star} on the number of lies. 

\begin{lem} \label{combi-lem} Fix an agent $i$. For any pair of vectors $u_i$ and $\hat{u}_i$ in $U_i^K$, there exists a subset $S$ of $\{1, \ldots, K\}$ and a bijection $\pi$ on $S$ 
such that
\begin{enumerate}[label = (\roman*), ref = \roman*]
    \item  \label{it:perm} $\hat{u}_i^k = u_i^{\pi(k)}$ for all $k$ in $S$;
    \item \label{it:card} $\# S \geq K - (\# U_i -1) K d( \marg u_i, \marg \hat{u}_i)$.
\end{enumerate}
\end{lem}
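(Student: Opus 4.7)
The plan is to recast the lemma in graph-theoretic terms. Define a directed multigraph $G$ on vertex set $U_i$ by placing, for each $k = 1, \ldots, K$, an edge $e_k$ from $u_i^k$ to $\hat{u}_i^k$. Writing $Q = \marg u_i$, $Q' = \marg \hat{u}_i$, and $m = \#U_i$, the out-degree and in-degree of $v$ in $G$ equal $KQ(v)$ and $KQ'(v)$, so the deficit $\delta(v) := KQ(v) - KQ'(v)$ is positive on $V^+ := \{v : Q(v) > Q'(v)\}$, negative on $V^- := \{v : Q(v) < Q'(v)\}$, with $N := K d(Q, Q') = \sum_{v \in V^+} \delta(v)$. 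A subset $S$ and bijection $\pi$ with $\hat{u}_i^k = u_i^{\pi(k)}$ for $k \in S$ exist if and only if the sub-multigraph on $\{e_k : k \in S\}$ has equal in- and out-degrees at every vertex: given such balance, decompose the sub-multigraph into edge-disjoint directed cycles $v_0 \to v_1 \to \cdots \to v_{r-1} \to v_0$ with edge indices $k_1, \ldots, k_r$ (so $u_i^{k_j} = v_{j-1}$ and $\hat{u}_i^{k_j} = v_j$), and set $\pi(k_j) = k_{j+1}$ cyclically on each cycle.

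I would then induct on $N$. The base case $N = 0$ gives $|A_v| = |B_v|$ for every $v$, so $S = \{1, \ldots, K\}$ satisfies both conditions trivially. For the inductive step, the key combinatorial claim is that $G$ contains a simple directed path from some vertex in $V^+$ to some vertex in $V^-$ of length at most $m - 1$. To prove this, fix any $v \in V^+$ and let $R$ be the set of vertices reachable from $v$ via directed edges of $G$. Since no edges of $G$ leave $R$,
\[
    \sum_{w \in R} \delta(w) \;=\; -\bigl(\text{number of edges of } G \text{ entering } R\bigr) \;\leq\; 0.
\]
Because $\delta(v) > 0$, some $w \in R$ must satisfy $\delta(w) < 0$, i.e., $w \in V^-$; a shortest directed path from $v$ to $w$ is then simple and has length $\ell \leq m - 1$.

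Let $k_1, \ldots, k_\ell$ be the indices along such a path $v = v_0 \to v_1 \to \cdots \to v_\ell = w$. Delete these indices to obtain shorter vectors $u_i', \hat{u}_i'$ of length $K' = K - \ell$. Since the path is simple, deletion decreases $|A_{v_{j-1}}|$ and $|B_{v_j}|$ by one each, so $\delta(v)$ drops by one, $\delta(w)$ rises by one, and all other deficits are unchanged; hence $K' d(\marg u_i', \marg \hat{u}_i') = N - 1$. Applying the inductive hypothesis to $u_i', \hat{u}_i'$ yields $S' \subseteq \{1, \ldots, K\} \setminus \{k_1, \ldots, k_\ell\}$ and a bijection $\pi'$ on $S'$ with $|S'| \geq K' - (m-1)(N-1)$; setting $S = S'$ and $\pi = \pi'$ gives
\[
    |S| \;\geq\; K - \ell - (m-1)(N-1) \;\geq\; K - (m-1)N,
\]
which is precisely (ii). The main obstacle is the simple-path existence claim in the inductive step: the reachability argument above is what converts the positive total deficit into a concrete short path, and the length bound $\ell \leq m - 1$ is exactly where the factor $\#U_i - 1$ in (ii) originates.
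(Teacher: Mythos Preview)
Your argument is correct. Both you and the paper build the same directed multigraph on $U_i$ with an edge $e_k$ from $u_i^k$ to $\hat{u}_i^k$, and both exploit that a balanced sub-multigraph decomposes into edge-disjoint simple cycles, from which $\pi$ is read off. The difference is in how the balanced piece is produced. The paper \emph{adds} $K' = K d(\marg u_i,\marg \hat u_i)$ auxiliary edges to balance the whole graph, decomposes once into simple cycles, and then deletes every cycle that contains an auxiliary edge; since each such cycle has at most $\#U_i$ edges, at most $K'\,\#U_i$ edges are discarded, giving the bound. You instead \emph{remove} edges: by induction on $N = K d(\marg u_i,\marg \hat u_i)$ you peel off a simple directed path of length $\ell \le \#U_i - 1$ from a surplus vertex to a deficit vertex (the reachability argument is the right tool here), which lowers $N$ by one, and then recurse. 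The paper's route is a one-shot construction; yours is iterative and arguably more elementary, since it avoids invoking the global cycle decomposition except in the trivial base case. Both identify the factor $\#U_i - 1$ with the maximal length of a simple path (equivalently, the maximal cycle length minus one). One small cosmetic point: you use $|A_v|$ and $|B_v|$ without defining them; presumably $A_v = \{k : u_i^k = v\}$ and $B_v = \{k : \hat u_i^k = v\}$, and it would be cleaner to say so.
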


\begin{proof} Given $u_i$ and $\hat{u}_i$ in $U_i$, construct a directed multigraph as follows. The set of nodes is $U_i$. For each $k = 1, \ldots, K$, add edge $k$ from node $u_i^k$ to node $\hat{u}_i^k$. In this graph, each node $v_i$ has out-degree $\deg^+(v_i) = K \marg (v_i | u_i)$ and in-degree $\deg^-(v_i) = K \marg (v_i | \hat{u}_i)$. 

Now we add \textit{new} edges as follows. Add an edge from a node with net out-degree to a node with net in-degree, update the degrees of the new graph, and repeat until the graph is balanced, i.e., $\deg^+ ( v_i) = \deg^-(v_i)$ for all $v_i$ in $U_i$. Let $K'$ be the number of new edges added by the this procedure. We have
\begin{equation*}
\begin{aligned}
   K' &=\sum_{v_i} (\deg^+ (v_i) - \deg^-(v_i))_+ \\
    &= K \sum_{v_i} ( \marg(v_i | u_i) - \marg (v_i | \hat{u}_i))_+ \\
    &=  K d (\marg u_i, \marg \hat{u}_i). 
\end{aligned}
\end{equation*}

Now we have a balanced graph with $K + K'$ edges. Partition this graph into edge-disjoint cycles.\footnote{To do so, start at a node with an outgoing edge. Form a path by arbitrarily selecting outgoing edges until the path contains a cycle. Remove the cycle and repeat. Since the graph remains balanced, this process can terminate only when every edge has been removed.} Remove every cycle that contains at least one of the new edges. Define $S$ to be the set of labels of the remaining edges. Since at most $K' \#U_i$ edges were removed, we have
\[
    \#S \geq K + K' - K' \#U_i = K -  (\#U_i - 1) K d (\marg u_i, \marg \hat{u}_i),
\]
so $S$ satisfies \eqref{it:card}. For \eqref{it:perm}, define $\pi$ on $S$ by letting $\pi(k)$ be the label of the edge that follows edge $k$ in its cycle. (In particular, $\pi(k) = k$ if edge $k$ is a loop.) Since the head of edge $k$ equals the tail of edge $\pi(k)$, we have $\hat{u}_i^k = u_i^{\pi(k)}$ by the definition of the graph. 
\end{proof}

\bibliographystyle{ecta}
\bibliography{lit.bib}

\end{document}